\definecolor{myurlcolor}{rgb}{0,0,0.7}
\definecolor{myrefcolor}{rgb}{0,0.7,0}
\theoremstyle{plain}
\newtheorem{thm}{Theorem}
\newtheorem{obse}[thm]{Observation}
\theoremstyle{definition}
\theoremstyle{remark}
\newcommand{\sr}[1]{\langle[#1]\rangle}
\begin{document}


\title{Simple and tight monogamy relations for a class of Bell inequalities}



\author{Remigiusz Augusiak} \email{augusiak@cft.edu.pl}
\affiliation{Center for Theoretical Physics, Polish Academy of Sciences, Aleja Lotnik\'ow 32/46, 02-668 Warsaw, Poland}




\begin{abstract}
Physical principles constraints the way nonlocal correlations can be 
distributed among distant parties in a Bell-type experiment. These
constraints are usually expressed by monogamy relations that bound 
the amount of Bell inequality violation observed by a set of parties 
by the violation observed by a different set of parties.
Here we show that the no-signaling principle yields
a simple and tight monogamy relations for an important class of bipartite 
and multipartite Bell inequalities. We also link these trade-offs to the 
guessing probability--a key quantity in the device-independent information processing.
%
\end{abstract}

\pacs{}

\maketitle

\section{Introduction}

Consider spatially separated parties sharing some physical 
system and assume that they perform measurements 
on their shares of the system. It was Bell who proved that 
in some situations correlations between outcomes of these 
measurements cannot be explained by means of local hidden
variable models \cite{Bell} (see also Ref. \cite{reviewus}). 
Such correlations are termed nonlocal 
and one usually detects them with the aid of 
Bell inequalities (see, e.g., Refs. \cite{review}). And, importantly, 
they have recently become 
a key resource for device-independent quantum information
tasks. In particular, they allow for security not achievable with classical resources \cite{key1,key2}, 
outperform classical correlations at communication complexity problems \cite{compl}, 
or are crucial for generation \cite{gen1,gen2} and amplification \cite{ampl1} of true randomness.

However, it turns out that such nonlocal correlations 
cannot be distributed between distant parties in an arbitrary way. 
In fact, physical principles impose certain constraints on 
the way these resources can be shared by distant parties.
These constraints are usually referred to as monogamy relations and 
are important from the point of view of cryptographic security (cf. Ref. \cite{Leverrier}). 
In particular, in Refs. \cite{Barrett,Masanes} it was shown 
that in any theory respecting the no-signaling principle, 
which prevents any faster-than-light communication among the parties,
two observers sharing extremal nonlocal correlations 
in the polytope of all nonsignaling correlations must remain 
uncorrelated to any other party. Using the concept of linear programming, 
this statement was later put on a quantitative ground by Toner \cite{Toner}. He 
showed that if three parties $A$, $B$ and $C$ share nonsignaling correlations, then 
the value of the Clauser-Horne-Shimony-Holt (CHSH) Bell 
expression \cite{CHSH} observed by two different pairs of parties, say $AB$ and $AC$, 
satisfy
%
$I_{AB}+I_{AC}\leq 4,$
%
where the CHSH Bell inequality is defined as
\begin{equation}\label{CHSH}
I_{AB}:=\langle A_1B_1\rangle+\langle A_1B_2\rangle+
\langle A_2B_1\rangle-\langle A_2B_2\rangle\leq 2.
\end{equation}
with $A_i$ and $B_i$ denoting dichotomic observables measured
by the parties $A$ and $B$. This, in particular, 
implies that only a single pair, $AB$ or $AC$
can violate (\ref{CHSH}). This
monogamy relation is tight in the sense that 
for any pair $I_{AB}$ and $I_{AC}$ saturating it, one 
is able to construct nonsignaling correlations realizing these values.
If then correlations are to obey the laws of quantum theory 
(which is a particular example of a no-signaling theory)
a stronger monogamy relation in terms of
the CHSH Bell inequality can be derived \cite{TonerVerstraete}.


Following the idea of symmetric extensions and shareability of correlations, 
Toner's monogamy relation was further generalized to any bipartite 
Bell inequality (involving any number of measurements and outcomes) 
\cite{Marcin}. 
These general monogamy relations are however quite complicated and
alternative constructions of simpler 
relations were then put forward in Refs. \cite{RaviMon1,Ravi2}, 
which exploit the concept of the contradiction number and the graph theory, respectively.
However, both methods yield monogamy relations that are in general not tight.


The aim of this note is to introduce simple
monogamy relations for an important class of Bell inequalities.
Importantly, our inequalities are tight irrespectively of the scenario considered, and thus improve
on the results of Ref. \cite{RaviMon1}.
We will also link the new trade-offs to the guessing probability---a central quantity to the device-independent 
information processing---showing
that in certain cases our monogamy relations impose tight bounds on the latter.




\section{Preliminaries}

Before presenting our main results we need to introduce some 
basic notions and terminology that is necessary for further considerations, 
and also recall some of the previous constructions of monogamy 
relations.


\subsection{Bell scenario}
\label{Sec:Preliminaries2}
Let us begin by stating the scenario.
We consider the usual multipartite Bell-type experiment in which 
$N$ spatially separated parties $A_i$ $(i=1,\ldots,N)$ 
share some physical system. Each party can perform one of $M$ 
measurements on their share of the system, and each measurement is assumed to have 
$d$ outcomes. For the party $A_i$ we denote the measurement choices and outcomes 
by $x_i=1,\ldots,M$ and $a_i=0,\ldots,d-1$, respectively. 
One usually refers to the this scenario as to $(N,M,d)$ scenario.


The above experiment, when repeated many times, creates correlations
that are described by a collection of conditional probabilities 
\begin{equation}
\{p(a_1,\ldots,a_N|x_1,\ldots,x_N)\}_{a_1,\ldots,a_N;x_1,\ldots,x_N},
\end{equation}
where $p(a_1,\ldots,a_N|x_1,\ldots,x_N)$ is
the probability that the party $A_i$ obtains $a_i$ upon performing 
the $x_i$th measurements. In what follows we will also use $p(\mathbf{a}|\mathbf{x})$ with $\mathbf{a}=a_1,\ldots, a_N$ and 
$\mathbf{x}=x_1,\ldots, x_N$ to denote these probabilities, and 
$\mathbf{p}$ to denote the set $\{p(\mathbf{a}|\mathbf{x})\}$.
%
%
%
%
In the case of three parties $(N=3)$ we utilize a slightly different notation in which 
the parties are denoted by $A$, $B$, and $C$, while their measurement choices and results by $x,y,z$ and $a,b,c$, 
respectively. In this notation, $p(abc|xyz)$
is the probability of obtaining $a,b,c$ upon measuring $x,y,z$.

We consider correlations $\{p(\mathbf{a}|\mathbf{x})\}$
that obey the no-signaling principle, which prevents any faster-than-light communication among the parties. It states that the outcomes obtained by a group of parties cannot depend on the measurement choices made by the remaining parties. Mathematically, this is formulated as the following set of
linear conditions for the probabilities $p(\mathbf{a}|\mathbf{x})$:
\begin{eqnarray}\label{nosignaling}
&&\sum_{a_{i}}p(a_1,\ldots,a_{i},\ldots,a_{i_N}|x_1,
\ldots,x_{i},\ldots, x_N)\nonumber\\
&&=\sum_{a_{i}}p(a_1,\ldots,a_{i},\ldots,a_N|x_1,
\ldots,x_{i}',\ldots, x_N)
\end{eqnarray}
%
%
which hold for all $x_i,x_i'$ and $a_1,\ldots,a_{i-1},a_{i+1},\ldots,a_N$ 
and $x_1,\ldots,x_{i-1},x_{i+1},\ldots,x_N$ and all $i=1,\ldots,N$. Notice that 
in a given scenario correlations obeying the no-signaling principle form a polytope---a 
bounded convex set with finite number of extremal elements.

\subsection{Bell inequality and previous monogamy relations}
\label{Sec:Preliminaries3}
Let us now state the Bell inequality which we will use to formulate our 
monogamy relations. For clarity we begin from the bipartite case. 

The Bell inequality we consider was introduced by Barrett, Kent and Pironio (BKP) in Ref. \cite{BKP} and is given by
\begin{equation}\label{BKPNd}
I^{M,d}_{AB}:=\sum_{\alpha=1}^{M}\left(\sr{A_{\alpha}-B_{\alpha}}
+\sr{B_{\alpha}-A_{\alpha+1}}\right)\geq
d-1,
\end{equation}
where $\langle \Omega\rangle=\sum_{i=1}^{d-1}P(\Omega=i)$ is the standard expectation value of the variable $\Omega$, $[X]$ denotes $X$ modulo $d$, and we use the convention that $X_{M+1}=X_1+1$. For $M=2$
this class of Bell inequalities reproduces the 
Collins-Gisin-Linden-Massar-Popescu (CGLMP) 
Bell inequalities introduced in Ref. \cite{CGLMP}, while for $d=2$
the so-called chained Bell inequalities \cite{chainedBI}, and in the particular case of
$M=d=2$ the CHSH Bell inequality (\ref{CHSH}).

For further purposes 
recall that the maximal violation of the BKP Bell inequality by no-signaling correlations 
amounts to $I_{AB}^{M,d}=0$ and correlations
realizing this value, denoted $\mathbf{p}_{2}^{\mathrm{NL}}=\{p^{\mathrm{NL}}(a,b|x,y)\}$, are given by
%
%
\begin{eqnarray}
p^{\mathrm{NL}}(a,a|x,x)&\!\!=\!\!&\frac{1}{d},\quad x=1,\ldots,M\\
p^{\mathrm{NL}}(a,a|x+1,x)&\!\!=\!\!&\frac{1}{d},\quad x=1,\ldots,M-1\\
p^{\mathrm{NL}}(a,a+1|1,M)&\!\!=\!\!&\frac{1}{d},
\end{eqnarray}
which implies that $p^{\mathrm{NL}}(a,b|x,x)=0$ for $a\neq b$ and $x=1,\ldots,M$,
$p^{\mathrm{NL}}(a,b|x+1,x)=0$ for $a\neq b$ and $x=1,\ldots,M-1$, and $p^{\mathrm{NL}}(a,b|1,M)=0$
for $b\neq a+1$. For the remaining choices of measurements $x$
and $y$ one simply takes $p^{\mathrm{NL}}(a,b|x,y)=1/d^2$. 
It is not difficult to see that 
this probability distribution 
satisfies all the constraints (\ref{nosignaling}) as both its reductions 
are uniform, i.e.,
$p_A^{\mathrm{NL}}(a|x)=p_B^{\mathrm{NL}}(b|y)=1/d$ for any $a,b,x,y$. 
Let us also mention that  
the local correlations saturating the inequality (\ref{BKPNd}), in what follows
denoted $\mathbf{p}_2^{\mathrm{loc}}=\{p^{\mathrm{loc}}(a,b|x,y)\}$, read
\begin{equation}\label{lok1}
p^{\mathrm{loc}}(a,b|x,y)=p_A^{\mathrm{loc}}(a|x)p_B^{\mathrm{loc}}(b|y)
\end{equation}
with the local probabilities defined as
%
$p_A(0|x)=p_B(0|y)=1$ for $x,y=1,\ldots, M.$
%

The BKP Bell inequality was generalized in Ref. \cite{rodrigo} to any number of 
observers $N$, and this generalization takes the following form
\begin{eqnarray}\label{BKPNMd}
I^{N,M,d}_{A_1\ldots
A_N}&\!\!\!:=\!\!\!&
\sum_{\boldsymbol{\alpha}}
(\sr{A_{\alpha_1}^{(1)}-A_{\alpha_1+\alpha_2-1}^{(2)}+\ldots
\nonumber\\
&&\hspace{0.75cm}+(-1)^{N}A_{\alpha_{N-2}+\alpha_{N-1}-1}^{(N-1)}+(-1)^{N-1}A_{\alpha_{N-1}}^{(N)}}\nonumber\\
&&\hspace{0.5cm}+\sr{A_{\alpha_1+\alpha_2-1}^{(2)}-A_{\alpha_1+1}^{(1)}+\ldots\nonumber\\
&&\hspace{1cm}+(-1)^{N-1}A_{\alpha_{N-2}+\alpha_{N-1}-1}^{(N-1)}+(-1)^{N}A^{(N)}_{\alpha_{N-1}}})\nonumber\\
&\geq&
M^{N-2}(d-1),
\end{eqnarray}
with $\boldsymbol{\alpha}=\alpha_1,\ldots,\alpha_{N-1}$, where each $\alpha_{i}=1,\ldots,M$. Clearly, for 
the particular case of two observers the above Bell inequality reproduces the one in (\ref{BKPNd}):
$I^{2,M,d}\equiv I^{M,d}$. It is also worth mentioning that it is of Svetlichny type \cite{Svetlichny}, 
that is, any correlations $\{p(\boldsymbol{a}|\boldsymbol{x})\}$ violating it are 
genuinely multipartite nonlocal (see Refs. \cite{GMN}).

The maximal violation of these Bell inequalities over all nonsignaling correlations
is again $I^{N,M,d}_{A_1\ldots A_N}=0$ and it is achieved by the probability distribution $\mathbf{p}^{\mathrm{NL}}_N=\{p^{\mathrm{NL}}(\mathbf{a}|\mathbf{x})\}$ with the probabilities $p^{\mathrm{NL}}(\mathbf{a}|\mathbf{x})$ defined in the following way
\begin{eqnarray}
&&p^{\mathrm{NL}}(\mathbf{a}|\alpha_1,\alpha_1+\alpha_2-1,\ldots,\alpha_{N-2}+\alpha_{N-1}-1,\alpha_{N-1})\nonumber\\
&&\hspace{1cm}=
\left\{
\begin{array}{cl}
\displaystyle\frac{1}{d^{N-1}}, & \displaystyle\sum_{i=0}^{N-1}(-1)^{i} a_{\alpha_i+\alpha_{i+1}-1}=f(\boldsymbol{\alpha})\\[2ex]
0, & \mathrm{otherwise}
\end{array}
\right.,
\end{eqnarray}
where it is assumed that $\alpha_0=\alpha_N=1$, $\alpha_i=1,\ldots,M$ for all $i=1,\ldots,N-1$
%
%
and the factor $f(\boldsymbol{\alpha})$ is defined as
\begin{equation}\label{f}
f(\boldsymbol{\alpha})=\sum_{i=0}^{N-1}(-1)^{i+1}H(\alpha_i+\alpha_{i+1}-1-M)
\end{equation}
with $H(x)$ being the Heaviside's function, i.e., $H(x)=1$ for $x>0$ and $x\leq 0$.
This factor is introduced
to take into account the fact that $A^{(j)}_{M+k}=A_{k}^{(j)}+1$ and the condition defining 
the probabilities in the Bell expression modifies. If for all $i=1,\ldots,N-2$, $\alpha_i+\alpha_{i+1}-1\leq M$, 
then $f=0$, but if for some $i$'s, $\alpha_i+\alpha_{i+1}-1>M$, then it might be that $f\neq 0$.

In the same way one has
\begin{eqnarray}
&&p^{\mathrm{NL}}(\textbf{a}|\alpha_1+1,\alpha_1+\alpha_2-1,\ldots,\alpha_{N-2}+\alpha_{N-1}-1,\alpha_{N-1})\nonumber\\
&&\hspace{1cm}=
\left\{
\begin{array}{ll}
\displaystyle\frac{1}{d^{N-1}}, \,\,& \displaystyle\sum_{i=0}^{N-1}(-1)^{i}a_{\alpha_i+\alpha_{i+1}-1}=f(\boldsymbol{\alpha})\\[2ex]
0, & \mathrm{otherwise}
\end{array}
\right.,
\end{eqnarray}
with $\alpha_0=2$, $\alpha_N=1$, $\alpha_i=1,\ldots,M$ for all $i=1,\ldots,N-1$, and $f(\boldsymbol{\alpha})$ 
is defined as in Eq. (\ref{f}), but with $\alpha_0=2$ (now the factor takes also into account that $\alpha_1+1$ exceeds $M$ for $\alpha_1=M$). 
For the remaining choices of the measurements we simply assume
%
$p^{\mathrm{NL}}(a_1,\ldots, a_N|\alpha_1,\ldots, \alpha_N)=1/d^N$.
%
%
On the other hand, the local correlations saturating the inequality (\ref{BKPNMd})
are straightforward generalization of those in Eqs. (\ref{lok1}), that is,
\begin{equation}
p_{A_1\ldots A_N}^{\mathrm{loc}}(\mathbf{a}|\mathbf{x})=\prod_{i=1}^Np_{A_i}^{\mathrm{loc}}(a_i|x_i)
\end{equation}
with the local probability distributions $\{p_{A_i}^{\mathrm{loc}}(a_i|x_i)\}$ defined as
%
$p_{A_i}^{\mathrm{loc}}(0|x_i)=1$
%
for any $x_i=1,\ldots,M$ and $i=1,\ldots,N$.

Now, having introduced the Bell inequality that we will use in further 
considerations, let us briefly recall the previous monogamy relations satisfied by it. 
The first monogamy relation is due to Paw\l{}owski and Brukner \cite{Marcin}
and reads
\begin{equation}\label{PB}
I_{AB_1}^{2,M,d}+I_{AB_2}^{2,M,d}+\ldots+I_{AB_M}^{2,M,d}\geq M(d-1).
\end{equation}
It implies that the sum of violations between the party $A$ and 
$M$ different Bobs $B_i$ cannod exceed $M$ times the classical bound
of the BKP Bell inequality. 
A much simpler monogamy relation for this Bell inequality was derived in Ref. \cite{RaviMon1} (see also Ref. \cite{Ravi2}) and it reads
\begin{equation}\label{monog15}
I_{AB}^{2,M,d}+I_{AC}^{2,M,d}\geq 2(d-1).
\end{equation}
Notice that for $M=2$ this monogamy is the same as 
the one in Eq. (\ref{PB}) and therefore it is tight. However, 
for any $M>2$ it is not tight except for the case 
$I_{AB}^{2,M,d}=I_{AC}^{2,M,d}=d-1$. Below we show that 
for other values of $I_{AB}^{2,M,d}$ and $I_{AC}^{2,M,d}$ that saturate 
the above inequality there does not exist nonsignaling correlations realizing these values. 

Our aim in what follows is to tighten the monogamy relation (\ref{monog15}) and then generalize it to any number of observers.

\section{Simple monogamy relations}

%
%
We are now ready to present our monogamy relations. For the sake of
clarity we will begin from the bipartite case, but, before that,  
let us show a simple proof of the monogamy relation 
(\ref{monog15}). 
\begin{obse}For any three-partite nonsignaling correlations
$\{p(a,b,c|x,y,z)\}$ with $M$ $d$-outcome measurements per party, 
the following inequality 
\begin{equation}\label{MonogNtight}
I_{AB}^{2,M,d}+I_{AC}^{2,M,d}\geq 2(d-1)
\end{equation}
is satisfied.
\end{obse}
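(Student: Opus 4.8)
\emph{Approach.} I would deduce the two-party bound from the one-party bound $I^{M,d}\ge d-1$, so let me first recall why the latter holds. Since $\sr{\Omega}$ is the mean of the integer $[\Omega]\in\{0,\dots,d-1\}$, two elementary facts hold for any family of jointly distributed variables reduced mod $d$: a triangle inequality $\sr{X-Z}\le\sr{X-Y}+\sr{Y-Z}$ (pointwise, $[x-y]+[y-z]$ and $[x-z]$ are nonnegative integers congruent mod $d$, so the former is at least the latter), and a cyclic bound: going once around a closed chain $V_0\to V_1\to\dots\to V_n=V_0$ along which the consecutive differences carry a net shift $s\not\equiv 0$, one has $\sum_k\sr{V_k-V_{k+1}}\ge[s]$, the representative of $s$ in $\{1,\dots,d-1\}$, because the integers $[V_k-V_{k+1}]$ sum to $s$ modulo $d$. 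The inequality $I^{M,d}_{AB}\ge d-1$ is exactly the cyclic bound applied to the $2M$-edge cycle $A_1\to B_1\to A_2\to\dots\to B_M\to A_1+1$, whose net shift is $d-1$; a local model provides the joint distribution of all the $A_\alpha$ and $B_\alpha$ that this argument requires.

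For no-signaling correlations this joint distribution does not exist (different settings of one party are not co-measurable), which is precisely why $I_{AB}$ and $I_{AC}$ can each reach $0$, so the cyclic bound cannot be used on them separately. The plan is to combine the two expressions and use that the no-signaling equations (\ref{nosignaling}) do supply, for every $\alpha$, the jointly measurable triples $(A_\alpha,B_\alpha,C_\alpha)$ and $(A_{\alpha+1},B_\alpha,C_\alpha)$, with consecutive triples agreeing on their shared one- and two-party marginals. Concretely I would (i) group the $4M$ terms of $I_{AB}+I_{AC}$ so that each group lies in one such triple; (ii) glue these triples along the chain of overlaps into a single distribution $Q$ that reproduces every term of $I_{AB}+I_{AC}$ except the two coming from a single interior triple $(A_{\alpha_0+1},B_{\alpha_0},C_{\alpha_0})$, $\alpha_0<M$, deliberately left out because closing the loop would contradict the nonlocality that $\mathbf{p}_{AB}$ may carry; (iii) apply the cyclic bound to $Q$ twice, once for the $B$-cycle and once for the $C$-cycle (both still of net shift $d-1$, since the monodromy sits on the wrap-around edge, which was glued in), obtaining $2(d-1)$ up to the ``defect'' at $\alpha_0$; and (iv) cancel the two defects using that the missing $I_{AB}$- and $I_{AC}$-terms, $\sr{B_{\alpha_0}-A_{\alpha_0+1}}$ and $\sr{C_{\alpha_0}-A_{\alpha_0+1}}$, sit together in the jointly measurable triple $(A_{\alpha_0+1},B_{\alpha_0},C_{\alpha_0})$, whose $(B_{\alpha_0},C_{\alpha_0})$-marginal is the same in $Q$ and in the true box by no-signaling, so a final triangle inequality in that triple dominates the glued values.

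The delicate point is step (iv): one has to check that the error made by swapping the true edges into $A_{\alpha_0+1}$ for the glued ones is non-positive once summed over both Bell expressions, i.e. that the unavoidable gap at the cut never helps the adversary. This is where the tripartite (as opposed to merely bipartite) no-signaling constraints are essential, and it is what produces the full $2(d-1)$ rather than the weaker value a single combined cycle of net shift $d-2$ would give. A more computational but equivalent route is to exhibit a Farkas certificate directly: nonnegative multipliers for the equalities (\ref{nosignaling}) and for the positivity of each $\sr{\cdot}$ whose weighted sum equals $I_{AB}+I_{AC}-2(d-1)$. Tightness is then read off the explicit no-signaling distribution constructed afterwards, and for $M=2$ the relation collapses to the known tight bound (\ref{PB}).
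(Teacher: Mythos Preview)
Your approach is genuinely different from the paper's, and the paper's route is much shorter. The paper simply uses the identity $\sr{\Omega}+\sr{-\Omega-1}=d-1$ to add zero to $I_{AB}^{M,d}$ and rewrite it as a sum of $M-1$ CGLMP expressions $I^{2,d}_{AB}[A_1,A_{i+1};B_i,B_{i+1}]$, minus $(M-2)(d-1)$. Doing the same for $I_{AC}^{M,d}$ and pairing the $i$th CGLMP pieces, one applies the already-known $M=2$ monogamy $I_{AB}^{2,d}+I_{AC}^{2,d}\ge 2(d-1)$ (the Paw\l{}owski--Brukner relation (\ref{PB}) for two Bobs) to each of the $M-1$ pairs; the constants then collapse to $2(d-1)$. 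So the general-$M$ case is reduced in one algebraic step to the $M=2$ case.

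Your gluing idea is attractive, but step~(iv) is not a detail: as written it is a genuine gap. After gluing the chain with the context $(A_{\alpha_0+1},B_{\alpha_0},C_{\alpha_0})$ removed, the two cyclic bounds on $Q$ give
\[
I_{AB}+I_{AC}\;\ge\;2(d-1)\;+\;\Big(\Delta_B+\Delta_C\Big),\qquad \Delta_X:=\sr{X_{\alpha_0}-A_{\alpha_0+1}}_{\mathrm{true}}-\sr{X_{\alpha_0}-A_{\alpha_0+1}}_{Q},
\]
and you need $\Delta_B+\Delta_C\ge 0$. The $(B_{\alpha_0},C_{\alpha_0})$-marginal indeed agrees in $Q$ and in the true box, but that does not bound the $Q$-values of the two missing edges from above: the variable $A_{\alpha_0+1}$ in $Q$ is inherited from the far side of the cut, and its joint with $(B_{\alpha_0},C_{\alpha_0})$ is fixed by all the earlier gluing choices, not just by the shared $(B_{\alpha_0},C_{\alpha_0})$-marginal. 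A triangle inequality inside the true triple yields an \emph{upper} bound on $\sr{B_{\alpha_0}-C_{\alpha_0}}$, not a lower bound on $\sr{B_{\alpha_0}-A_{\alpha_0+1}}+\sr{C_{\alpha_0}-A_{\alpha_0+1}}$, so it cannot by itself force $\Delta_B+\Delta_C\ge 0$. Conversely, if for every nonsignaling box you could choose the gluing so that the $Q$-joint on $(A_{\alpha_0+1},B_{\alpha_0},C_{\alpha_0})$ equals the true one (making the defect zero), you would have produced a global joint model for the full cycle of contexts, contradicting the nonlocality you explicitly allow for $\mathbf{p}_{AB}$. So the defect is generically nonzero and its sign is not controlled by the ingredients you list; closing this requires an additional argument (or the Farkas certificate you mention, which you would still have to write down). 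In the meantime, the paper's decomposition into $M-1$ CGLMP pieces gives a complete, two-line proof.
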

\begin{proof}
Since for any random variable
$\Omega$ the identity $\sr{\Omega}+\sr{-\Omega-1}=d-1$ holds true,
one easily finds that
\begin{equation}
\sum_{j=2}^{M-1}(\sr{A_1-B_{j}}+\sr{B_j-A_1-1})-(M-2)(d-1)=0.
\end{equation}
By adding this expression to $I^{2,M,d}_{AB}$, the latter, after some simple movements,
can be rewritten in the following way
\begin{eqnarray}\label{BKPNd2}
I^{M,d}_{AB}&=&\sum_{i=1}^{M-1}(\sr{A_1-B_{i}}+\sr{B_i-A_{i+1}}\nonumber\\
&&\hspace{1cm}+\sr{A_{i+1}-B_{i+1}}+\sr{B_{i+1}-A_1-1})\nonumber\\
&&-(M-2)(d-1).
\end{eqnarray}
As a result the BKP Bell expression $I^{M,d}_{AB}$ is a combination of
$M-1$ CGLMP Bell expressions $I^{2,d}_{AB}$ involving different pairs of
$A$'s and $B$'s measurements. Clearly, the same holds for $I_{AC}^{M,d}$.
Plugging Eq. (\ref{BKPNd2}) into (\ref{MonogNtight}) one then has
\begin{eqnarray}\label{kadarka}
I_{AB}^{2,M,d}+I_{AC}^{2,M,d}&=&\sum_{i=1}^{M-1}\left(I_{AB}^{2,2,d}[A_1,A_{i+1};B_i,B_{i+1}]\right.\nonumber\\
&&\left.\hspace{1cm}+I_{AC}^{2,2,d}[A_1,A_{i+1};C_i,C_{i+1}]\right)\nonumber\\
&&-(M-2)(d-1),
\end{eqnarray}
where the square parentheses contain the measurement for which $I^{2,d}$
is defined. Now, it follows from Eq. (\ref{PB}) 
that $I_{AB}^{2,2,d}+I_{AC}^{2,2,d}\geq 2(d-1)$, which after being applied to 
(\ref{kadarka}) yields (\ref{MonogNtight}). This completes the proof.
\end{proof}

%
%
%
%
%
%

As already said, the trade-off (\ref{MonogNtight}) is in general 
not tight. The question we want to address now is whether 
it can be tightened by modifying it to the following form
\begin{equation}\label{General}
\alpha(M,d) I_{AB}^{M,d}+\beta(M,d)I_{AC}^{M,d}\geq \gamma(M,d)
\end{equation}
with $a(M,d), b(M,d)$ and $c(M,d)$ being some constants
that in general depend on $M$ and $d$. 
As proven in the following theorem this is the case for 
the BKP Bell inequality.


\begin{thm}\label{thm1}
All three-partite nonsignalling probability distributions $\{p(a,b,c|x,y,z)\}$
with $M$ inputs and $d$ outputs per site must satisfy the following pair of
inequalities:
%
\begin{eqnarray}\label{ThreeMonogamy}
\label{nier1} (M-1)I_{AB}^{M,d}+I^{M,d}_{AC}\geq M(d-1), \\
\label{nier2} I_{AB}^{M,d}+(M-1)I^{M,d}_{AC}\geq M(d-1).
\end{eqnarray}
%
%
%
%
%
\end{thm}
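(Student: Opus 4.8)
The plan is to reduce the statement, exactly as in the Observation, to the elementary facts behind the monogamy of the CGLMP inequality: (i) $\sr{\Omega}\ge0$; (ii) the flip identity $\sr{\Omega}+\sr{-\Omega-1}=d-1$; and (iii) the chaining (triangle) inequality $\sr{X-Z}\le\sr{X-Y}+\sr{Y-Z}$, which holds pointwise in arithmetic modulo $d$ and hence in expectation whenever $X,Y,Z$ are jointly measurable --- in a tripartite no-signaling box this is true for any triple built from one observable of $A$, one of $B$, and one of $C$. I would also note, using the rewriting of the Observation together with the cyclic symmetry $\alpha\mapsto\alpha+1$ of the chain defining $I^{M,d}_{AB}$, that the BKP expression has $M$ ``star'' decompositions into $M-1$ CGLMP expressions, one centered at each of Alice's settings.

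For (\ref{nier1}) the key structural remark is that the combination $(M-1)I^{M,d}_{AB}+I^{M,d}_{AC}$ contains exactly $M$ ``twisted'' modular-distance terms carrying the shift $-1$: the $M-1$ copies of $\sr{B_M-A_1-1}$ produced by the $M-1$ copies of the $AB$-chain, and the single term $\sr{C_M-A_1-1}$ of the $AC$-chain. I would then partition the $2M^2$ modular-distance terms of $(M-1)I^{M,d}_{AB}+I^{M,d}_{AC}$ into $M$ closed ``twisted walks'' on the graph whose vertices are the observables of $A$, $B$, $C$, each walk of length $2M$ and containing exactly one of these $M$ twisted terms. Because each walk is built so as to alternate $AB$-edges and $AC$-edges, it can be collapsed one vertex at a time by the chaining inequality (iii) --- every contraction is legitimate since the vertex removed lies between observables of two distinct parties --- down to a twisted triangle $\sr{A_i-B_j}+\sr{B_j-C_k}+\sr{C_k-A_i-1}$, which is $\ge d-1$ by one more application of (iii) followed by the flip identity (ii) (equivalently, by telescoping modulo $d$ over the jointly measurable triple $A_i,B_j,C_k$). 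Summing the $M$ resulting inequalities gives $(M-1)I^{M,d}_{AB}+I^{M,d}_{AC}\ge M(d-1)$; inequality (\ref{nier2}) then follows from the same construction with $B$ and $C$ interchanged.

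The step I expect to be the real obstacle is the combinatorial routing of the $M$ walks: one must choose them so that every one of the $2M^2$ terms is used exactly once, so that every Alice vertex that is contracted genuinely sits between a $B$- and a $C$-observable (this forces each walk to weave between the $M-1$ $B$-chains and the single $C$-chain, and is precisely where the asymmetry ``$M-1$ twists on the $B$ side, one on the $C$ side'' enters), and so that no term is left unaccounted for. Equivalently, the whole argument can be packaged as the construction of an explicit dual-feasible vector for the linear program $\min\{(M-1)I^{M,d}_{AB}+I^{M,d}_{AC}\}$ over the no-signaling polytope; verifying that this vector reproduces the Bell expression with nonnegative slack is the only genuine computation, and its optimality --- i.e.\ that the constant $M(d-1)$ cannot be lowered --- is confirmed by the explicit tight no-signaling distributions exhibited in the following section.
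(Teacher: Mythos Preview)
Your overall framework---partition $(M-1)I_{AB}^{M,d}+I_{AC}^{M,d}$ into $M$ closed cycles of length $2M$, each containing exactly one twisted edge, and telescope each cycle via the chaining inequality down to a twisted triangle worth $d-1$---is correct in spirit, and once the right partition is in hand it in fact yields a somewhat more direct argument than the paper's. But your description of the cycles contains a real gap. You claim each walk ``alternates $AB$-edges and $AC$-edges''; this is impossible by a simple count: $M$ alternating $2M$-cycles would consume $M^{2}$ edges of type $AC$, whereas the expression contains only the $2M$ such edges supplied by the single copy of $I_{AC}^{M,d}$. Hence for $M\ge 3$ almost all of each walk is forced to be a long $AB$-segment, along which every $A$-vertex sits between two $B$-observables (and every $B$-vertex between two $A$-observables), so your justification ``every contraction is legitimate since the vertex removed lies between observables of two distinct parties'' fails as stated.

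The partition that actually works is exactly the paper's: set $\widetilde{I}_j=I_{AB}^{M,d}-E_{AB}^{(j)}+E_{AC}^{(j)}$, i.e.\ in the $j$-th copy of the $AB$-chain replace the single vertex $B_j$ by $C_j$, so that each cycle carries only the two $AC$-edges incident to $C_j$. Your telescoping then \emph{does} go through, but the mechanism is not edge-alternation: one contracts outward from the lone $C$-vertex, and at every step the vertex being removed (alternately an $A_i$ and a $B_i$) has $C_j$ on one side and its original chain neighbour on the other, so the relevant triple is always one observable per party. After $2M-3$ contractions one is left with $\sr{A_1-C_j}+\sr{C_j-B_M}+\sr{B_M-A_1-1}\ge d-1$ (or its analogue for $j=M$), and summing over $j$ gives (\ref{nier1}). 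The paper arrives at the same bound $\widetilde{I}_j\ge d-1$ by a different device: it adds zero in the form $\sum_{i}(\sr{A_i-C_j}+\sr{C_j-A_i-1})-(M-2)(d-1)$, rearranges $\widetilde{I}_j$ into $M-1$ three-party four-term expressions in which $B$ and $C$ each use a single setting, and invokes the Paw\l{}owski--Brukner observation that any such expression is bounded below by its local value $d-1$. Your telescoping route avoids that detour and the external citation---but only once the partition has been set up correctly, which is precisely the step you left open.
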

Before proving this theorem let us comment that the first inequality 
imposes a stronger constraint when $I^{M,d}_{AB}\leq I^{M,d}_{AC}$, while the second one 
when $I^{M,d}_{AB}\geq  I^{M,d}_{AC}$.
%
%
\begin{proof}In what follows we will prove the first inequality. The second one
will follow from the first one by exchanging $B\leftrightarrow C$.

Let us begin by introducing the following notation
\begin{equation}\label{def}
E_{XY}^{(j)}=\sr{X_j-Y_j}+\sr{Y_j-X_{j+1}},
\end{equation}
with $j=1,\ldots,M$, where we remember that 
$E_{XY}^{(M)}=\sr{X_M-Y_M}+\sr{X_M-Y_{1}-1}$.

Now, let us concentrate on the left-hand side of Ineq. (\ref{nier1})
and rearrange all terms $E_{AB}^{(i)}$ and $E_{AC}^{(j)}$ appearing there in the following way.
From any of $M-1$ Bell expressions $I_{AB}^{M,d}$ we remove
a different $E^{(i)}_{AB}$ and replace it by the corresponding 
$E^{(i)}_{AC}$ taken from $I_{AC}^{M,d}$. The removed components 
we group with the one left in $I_{AC}^{M,d}$, i.e., $E_{AC}^{(M)}$.
This allows us to rewrite the left-hand side of
Ineq. (\ref{nier1}) as
%
%
\begin{equation}\label{nier3}
(M-1)I_{AB}^{M,d}+I^{M,d}_{AC}=\sum_{j=1}^{M}\widetilde{I}_j,
\end{equation}
where for any $j=1,\ldots,M$, 
%
\begin{eqnarray}
\widetilde{I}_j&=&I_{AB}^{M,d}-E_{AB}^{(j)}+E_{AC}^{(j)}\nonumber\\
&=&\sum_{\alpha=1}^{j-1}E_{AB}^{(\alpha)}
+E_{AC}^{(j)}+\sum_{\alpha=j+1}^{M}E_{AB}^{(\alpha)}.
\end{eqnarray}

Now, with the aid of the fact that for any variable 
$\Omega$, $\sr{\Omega}+\sr{-\Omega-1}=d-1$, we can straghtforwardly see that
\begin{eqnarray}
\hspace{-3cm}&&\sum_{i=1}^{j-1}(\sr{A_i-C_j}+\sr{C_j-A_i-1})\nonumber\\
&&+\sum_{i=j+2}^{M}(\sr{C_j-A_i}+\sr{A_i-C_j-1})-(M-2)(d-1). \nonumber\\
\end{eqnarray}
is an expression that amounts to zero. By adding it to $\widetilde{I}_j$ and rearranging some terms, 
we can express $\widetilde{I}_j$ in the following way
%
\begin{eqnarray}\label{rown1}
\widetilde{I}_j&=&\sum_{i=1}^{j-1}
E_{AB}^{(i)}+\sr{C_j-A_1-1}\nonumber\\
&&+\sum_{i=2}^{j-1}(\sr{A_i-C_j}
+\sr{C_j-A_i-1})+\sr{A_j-C_j}\nonumber\\
&&+\hspace{-0.2cm}\sum_{i=j+1}^{M}E_{AB}^{(i)}+\sr{C_j-A_{j+1}}\nonumber\\
&&+\sum_{i=j+2}^{M}\hspace{-0.1cm}(\sr{C_j-A_i}+\sr{A_i-C_j-1})+\sr{A_1-C_j}\nonumber\\
&&-(M-2)(d-1).
\end{eqnarray}
One immediately notices that all the components containing $C$ 
can be written in a simpler form, allowing 
to rewrite Eq. (\ref{rown1}) as
\begin{eqnarray}\label{rown2}
\widetilde{I}_j&=&\sum_{i=1}^{j-1}E_{AB}^{(i)}+\sum_{i=1}^{j-1}(\sr{A_{i+1}-C_j}+\sr{A_i-C_j-1})\nonumber\\
&=&\sum_{i=j+1}^{M}E_{AB}^{(i)}+\sum_{i=j+1}^{M}(\sr{C_j-A_i}+\sr{A_{i+1}-C_j-1})\nonumber\\
&&-(M-2)(d-1).
\end{eqnarray}
In the last step it is enough to use the definition 
of $E_{AB}^{(i)}$ [cf. Eq. (\ref{def})], which after rearranging
terms leads us to
\begin{eqnarray}\label{row2}
\widetilde{I}_j&=&\sum_{i=1}^{j-1}(\sr{A_i-B_i}+\sr{B_i-A_{i+1}}\nonumber\\
&&\hspace{1cm}+\sr{A_{i+1}-C_j}+\sr{C_j-A_i-1})\nonumber\\
&&+\sum_{i=j+1}^{M}(\sr{A_i-B_i}+\sr{B_i-A_{i+1}}\nonumber\\
&&\hspace{1.5cm}+\sr{A_{i+1}-C_j-1}+\sr{C_j-A_i})\nonumber\\
&&-(M-2)(d-1).
\end{eqnarray}
The expressions that appear under both sums in 
Eq. (\ref{row2}) are basically the same as those in 
Eq. (\ref{BKPNd}) with $M=2$. However, they 
are ``distributed'' among three parties instead of two, and 
$B$ and $C$ have only a single measurement at their choice. It was shown in 
Ref. \cite{Marcin} that a minimization of such an expression over nonsignalling correlations
can only give its local bound, which in this case is $d-1$. Consequently, we can 
bound $\widetilde{I}_j$ over nonsignalling correlations from below as
$\widetilde{I}_j\geq (j-1)(d-1)+(M-j)(d-1)-(M-2)(d-1)=(M-1)(d-1)-(M-2)(d-1)=d-1$.
Putting this lower bound into (\ref{nier3}), one directly obtains 
Ineq. (\ref{nier1}), which completes the proof.
\end{proof}

Let us now characterize our new monogamy relations. First, it follows 
from Eqs. (\ref{nier1}) and (\ref{nier2}) that if for one of the pairs $AB$ or $AC$, say $AB$,
violates the inequality (\ref{BKPNd}), i.e., $I_{AB}^{M,d}<d-1$, then the other pair
does not violate it because $I_{AC}^{M,d}> (M-1)(d-1)$. In other words, the two pairs 
cannot simultaneously violate the BKP Bell inequality.
In particular, if $AB$ violates (\ref{BKPNd}) maximally, in which case the first inequality (\ref{nier1})
should be used, the value of $I^{M,d}$ for the other pair obeys $I_{AC}^{M,d}\geq M(d-1)$. 
This also means that our monogamies impose stronger bounds on $I_{AB}^{M,d}$
and $I_{AC}^{M,d}$ than the monogamy (\ref{MonogNtight}) derived in Ref. \cite{RaviMon1}.
%
%
In fact, our monogamy relations are tight in the sense that 
for any pair of $I_{AB}^{M,d}$ and $I_{AC}^{M,d}$ for which 
the monogamy is saturated one can find three-partite no-signaling 
correlations realizing these values. This means that one cannot 
construct a stronger monogamy relation for nonsignaling correlations
that would contain only the two values $I_{AB}^{M,d}$ and $I_{AC}^{M,d}$. 

To prove the tightness, 
let us consider inequality (\ref{nier1}) and 
three-partite correlations $\mathbf{p}_{ABC}=\{p(a,b,c|x,y,z)\}$ 
defined as
\begin{equation}
p(a,b,c|x,y,z)=\widetilde{p}_{AB}^{q}(a,b|x,y)p_C(c|z)
\end{equation}
where $\widetilde{\mathbf{p}}^{q}=\{\widetilde{p}^q(a,b,|x,y)\}$ is a mixture of the extremal 
no-signaling correlations $\mathbf{p}^{\mathrm{NL}}_2$ maximally violating (\ref{BKPNd}) 
and the local correlations $\mathbf{p}_2^{\mathrm{loc}}$ saturating it (see Sec. \ref{Sec:Preliminaries3} for the definitions), 
i.e., 
\begin{equation}
\widetilde{p}_{AB}^{q}(a,b|x,z)=qp_{AB}^{NL}(a,b|x,y)+(1-q)p_{AB}^{\mathrm{loc}}(a,b|x,y)
\end{equation}
with $q\in [0,1]$, and $p_C(c|z)$ is
the probability distribution such that $p_C(0|z)=1$ for any $z$. One directly verifies that 
for these correlations, $I_{AB}^{M,d}=(1-q)(d-1)$ and $I_{AC}^{M,d}=qM(d-1)+(1-q)(d-1)$, meaning that 
$\mathbf{p}_{ABC}$ recovers the line $(M-1)I_{AB}^{M,d}+I_{AC}^{M,d}=M(d-1)$ for 
%
%
$I_{AB}^{M,d}\leq I_{AC}^{M,d}$.
For inequality (\ref{nier2}) one distributes $\widetilde{\mathbf{p}}^q$ between $A$ and $C$ and follows the same reasoning.

%
%

It is also worth checking whether other Bell inequalities with 
the contradiction number\footnote{Recall that contradiction number of 
a given Bell inequality is the smallest set of measurements of, say $B$, whose removal 
trivializes it, i.e., its maximal nonsignaling violation becomes achievable by local correlations \cite{RaviMon1}.} one would obey monogamy (\ref{nier1}) and (\ref{nier2}).
This is certainly not the case. As a counterexample let us consider a simple modification of $I_{AB}^{3,2}$ given by
\begin{eqnarray}
I_{AB}'&:=&\sr{A_0-B_0}+\sr{B_0-A_1}+\sr{A_1-B_1}\nonumber\\
&&+2(\sr{B_1-A_2}+\sr{A_2-B_2})+\sr{B_2-A_0-1}\nonumber\\
&\geq& 1.
\end{eqnarray}
The maximal nonsignaling violation of this Bell inequality corresponds to 
$I'_{AB}=0$. Exploiting linear programming it is straightforward to 
verify that the tight monogamy relations this Bell inequality obeys are
\begin{equation}
\left\{
\begin{array}{c}
3I'_{AB}+I'_{AC}\geq 4\\[1ex]
I'_{AB}+3I'_{AC}\geq 4.
\end{array}
\right.
\end{equation}

A stronger counterexample can be constructed using
the $\mathcal{I}^{3322}$ Bell inequality \cite{Sliwa}:
\begin{eqnarray}
\mathcal{I}^{3322}&:=&\langle A_1B_1\rangle+\langle A_1B_2\rangle+\langle A_2B_1\rangle+
\langle A_2B_2\rangle-\langle A_1B_3\rangle\nonumber\\
&&+\langle A_2B_3\rangle-\langle A_3B_1\rangle+
\langle A_3B_2\rangle\nonumber\\
&&-\langle A_2\rangle-\langle B_1\rangle-2\langle B_2\rangle\leq 0
\end{eqnarray}
First, let us notice that the maximal no-signaling value of
this Bell expression is $\mathcal{I}^{3322}=4$. 
Now, assuming that $AC$ violate maximally the above Bell inequality, 
i.e., $\mathcal{I}^{3322}_{AC}=4$, it is easy to verify with the aid of the 
linear programming that $AB$ can maximally violate $\mathcal{I}^{3322}_{AB}$ too. 
In other words, there exists a three-partite nonsignaling
correlations $\{p(abc|xyz)\}$ allowing to simultaneously violate both $\mathcal{I}^{3322}_{AB}$
and $\mathcal{I}^{3322}_{AC}$ maximally. Thus, $\mathcal{I}^{3322}$ is an example 
of a Bell inequality for which it is not possible to
formulate a monogamy of the form (\ref{nier1}) and (\ref{nier2}).

\section{Generalization to the multipartite case}

The monogamy relation (\ref{ThreeMonogamy}) can be straightforwardly generalized 
to an arbitrary number of parties $N$. Let us prove the following theorem.
\begin{thm}For any $(N+1)$-partite probability distribution 
$\{p(\mathbf{a}|\mathbf{x})\}$ with $M$ measurements and $d$ outcomes per site, 
the following inequalities are satisfied
\begin{eqnarray}
\label{nierN1} \hspace{-0.4cm}(M-1)I_{A_1\ldots A_N}^{N,M,d}+I^{N,M,d}_{A_1\ldots A_{N-1}A_{N+1}}\geq M^{N-1}(d-1), \\[1ex]
\label{nierN2} \hspace{-0.4cm} I_{A_1\ldots A_{N}}^{N,M,d}+(M-1)I^{N,M,d}_{A_1\ldots A_{N-1}A_{N+1}}\geq M^{N-1}(d-1),
\end{eqnarray}
where the first inequality is applied when $I_{A_1\ldots A_N}^{N,M,d}\leq I_{A_1\ldots A_{N-1}A_{N+1}}^{N,M,d}$, while 
the second one in the opposite case.
\end{thm}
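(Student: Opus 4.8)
The plan is to mimic the proof of Theorem~\ref{thm1} verbatim, only tracking how the counting changes when $B$ and $C$ are replaced by the $(N-1)$-party blocks and the CGLMP building block $I^{2,d}$ is replaced by the $N$-partite Svetlichny-type block $I^{N,2,d}$. Concretely, I would prove \eqref{nierN1} and deduce \eqref{nierN2} by exchanging the roles of $A_N$ and $A_{N+1}$. Write $G=A_1\ldots A_{N-1}$ for the common group, so the two Bell expressions are $I_G^{(N)}\equiv I^{N,M,d}_{GA_N}$ and $\widetilde{I}_G^{(N)}\equiv I^{N,M,d}_{GA_{N+1}}$. The first step is to set up the analogue of the decomposition \eqref{def}: the $N$-partite BKP expression \eqref{BKPNMd} is a sum over the multi-index $\boldsymbol{\alpha}=(\alpha_1,\ldots,\alpha_{N-1})$ of "elementary'' bracket terms, and I would isolate the dependence on the last party's measurement, collecting terms into blocks $E^{(j)}$ indexed by $j$ running over the $M^{N-1}$ values of $\boldsymbol{\alpha}$ (or, more economically, over the $M$ possible values of the index controlling the last observer, with the remaining $M^{N-2}$ choices summed inside each block). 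The key combinatorial bookkeeping is that there are $M^{N-1}$ such elementary blocks in total and the classical bound $M^{N-2}(d-1)$ distributes as one unit $(d-1)$ per value of the last index after the $M^{N-2}$-fold inner sum.

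The second step is the swap-and-regroup trick exactly as in Theorem~\ref{thm1}: from each of the $M-1$ copies of $I_{GA_N}^{N,M,d}$ on the left-hand side of \eqref{nierN1}, remove a distinct block $E^{(j)}_{A_N}$ and substitute the corresponding block $E^{(j)}_{A_{N+1}}$ drawn from the single copy of $I_{GA_{N+1}}^{N,M,d}$; the $M-1$ removed $A_N$-blocks are reabsorbed into what remains of the $A_{N+1}$-expression. This rewrites the left-hand side as $\sum_{j=1}^{M}\widetilde{I}_j$, where each $\widetilde{I}_j$ consists of most of the $A_N$-expression with its $j$th block replaced by the $A_{N+1}$ analogue. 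The third step is to add a "zero'' built from the identity $\sr{\Omega}+\sr{-\Omega-1}=d-1$ — one such telescoping identity for each of the $A_{N+1}$-brackets that needs a partner — so that, after rearranging, $\widetilde{I}_j$ becomes a sum of $M^{N-1}/\,$(something) elementary expressions each having the form of the $(N,2,d)$ BKP inequality \eqref{BKPNMd} with $M=2$, "smeared'' across $N+1$ parties but with $A_N$ and $A_{N+1}$ each holding only one effective measurement, minus the correction term $(M^{N-1}-2M^{N-2})(d-1)$ (the $N$-partite analogue of $(M-2)(d-1)$).

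The fourth step invokes the shareability/symmetric-extension result of Paw\l{}owski--Brukner \cite{Marcin}: an expression of the form of a $(N,2,d)$ BKP inequality in which one party has a single measurement cannot be pushed by nonsignaling correlations below its local value $d-1$. (I would need the multipartite version of that statement; since \eqref{BKPNMd} is the stated multipartite generalization and \cite{Marcin} is cited for exactly this kind of minimization, I will lean on it, flagging that one uses the no-signaling constraint on the one-measurement party to reduce to the local bound.) This yields $\widetilde{I}_j\ge (M^{N-1}-M^{N-2})(d-1)-(M^{N-1}-2M^{N-2})(d-1)=M^{N-2}(d-1)$, and summing over the $M$ values of $j$ gives $M^{N-1}(d-1)$, which is \eqref{nierN1}.

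I expect the main obstacle to be purely notational: correctly indexing the elementary blocks of \eqref{BKPNMd} and verifying that the swap-and-regroup step plus the added zeros reassemble into exactly $M^{N-2}$ copies of the one-measurement $(N,2,d)$ block per value of $j$, with the right correction constant. The conceptual content is identical to the tripartite case and the inductive/combinatorial structure of \eqref{BKPNMd} (linear in each observer's bracket data) is what makes the substitution legal; the risk is an off-by-$M^{N-2}$ slip in the classical-bound accounting, so I would double-check the final arithmetic against the $N=2$ specialization, which must reduce to \eqref{ThreeMonogamy}.
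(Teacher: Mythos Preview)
Your plan diverges from the paper's proof in a substantive way. The paper does \emph{not} rerun the swap-and-regroup manipulation of Theorem~\ref{thm1} at the $N$-partite level. Instead it first proves a nontrivial index-shifting identity (handled separately for even and odd $N$, using the cyclic rule $A^{(i)}_{M+\alpha}=A^{(i)}_\alpha+1$) showing that the two bracket types in \eqref{BKPNMd} can be written with a common first-party index. It then introduces an auxiliary composite variable $X_{\alpha_{N-1}}^{\boldsymbol{\alpha}'}$ built from the outcomes of $A_1,\ldots,A_{N-1}$, so that $I_{A_1\ldots A_N}^{N,M,d}=\sum_{\boldsymbol{\alpha}'}I^{M,d}_{XA^{(N)}}$ decomposes as a sum of $M^{N-2}$ genuinely \emph{bipartite} BKP expressions between the coarse-grained party $X$ and $A_N$. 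The multipartite monogamy then follows by applying the already-proven bipartite inequalities \eqref{nier1}--\eqref{nier2} termwise and summing: $M^{N-2}\cdot M(d-1)=M^{N-1}(d-1)$. No new lemma is needed, and the $M^{N-2}$ bookkeeping you worry about is automatic---it is just the number of values of $\boldsymbol{\alpha}'$.

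Your route---lifting each step of Theorem~\ref{thm1} directly---might be workable, but your fourth step is a genuine gap rather than a notational nuisance. You need that an $(N{+}1)$-partite expression of $(N,2,d)$ BKP type in which $A_N$ and $A_{N+1}$ each hold a single measurement cannot be pushed below its local value by nonsignaling correlations, and you acknowledge that \cite{Marcin} does not supply this multipartite statement. It is probably true (conditioning on the one-measurement party's outcome and using no-signaling should reduce the problem), but it is an extra lemma you would have to prove, whereas the paper's coarse-graining trick sidesteps it entirely by reducing to the bipartite case already in hand.
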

\begin{proof}
Let us first assume that $N$ is even and show that 
\begin{widetext}
\begin{eqnarray}\label{Gargalo}
&&\hspace{-1cm}\sum_{\boldsymbol{\alpha}}
\sr{A_{\alpha_1}^{(1)}-A_{\alpha_1+\alpha_2-1}^{(2)}+\ldots
+(-1)^{N}A_{\alpha_{N-2}+\alpha_{N-1}-1}^{(N-1)}+(-1)^{N-1}A_{\alpha_{N-1}}^{(N)}}\nonumber\\
&&\hspace{-1cm}=\sum_{\boldsymbol{\alpha}}
\sr{A_{\alpha_1+1}^{(1)}-A_{\alpha_1+\alpha_2-1}^{(2)}+\ldots
+(-1)^{N}A_{\alpha_{N-2}+\alpha_{N-1}-1}^{(N-1)}+(-1)^{N-1}A_{\alpha_{N-1}+1}^{(N)}},
\end{eqnarray}
that is, we want to show that the expression in the first line of the above equation
does not change if we add one to both indices $\alpha_1$ and $\alpha_{N-1}$. To this end, we first 
shift $\alpha_{N-1}\to \alpha_{N-1}+1$, which yields  
\begin{eqnarray}\label{sum1}
&&\hspace{-0.8cm}\sum_{\boldsymbol{\alpha}}
\sr{A_{\alpha_1}^{(1)}-A_{\alpha_1+\alpha_2-1}^{(2)}+\ldots
+(-1)^{N}A_{\alpha_{N-2}+\alpha_{N-1}-1}^{(N-1)}+(-1)^{N-1}A_{\alpha_{N-1}}^{(N)}}\nonumber\\
&&\hspace{-0.6cm}=\sum_{\boldsymbol{\alpha}'}\sum_{\alpha_{N-1}=0}^{M-1}
\sr{A_{\alpha_1+1}^{(1)}-A_{\alpha_1+\alpha_2-1}^{(2)}+\ldots
+(-1)^{N}A_{\alpha_{N-2}+\alpha_{N-1}}^{(N-1)}+(-1)^{N-1}A_{\alpha_{N-1}+1}^{(N)}},
\end{eqnarray}
where $\boldsymbol{\alpha}'=\alpha_1,\ldots,\alpha_{N-2}$.
Then, by employing the rule that $A_{M+\alpha}^{(i)}=A_{\alpha}^{(i)}+1$, 
one can show that in the above sum 
the term corresponding to $\alpha_{N-1}=0$ is the same as the one corresponding to 
$\alpha_{M-1}$, meaning that we can rewrite (\ref{sum1}) as
\begin{eqnarray}
&&\hspace{-0.8cm}\sum_{\boldsymbol{\alpha}}
\sr{A_{\alpha_1}^{(1)}-A_{\alpha_1+\alpha_2-1}^{(2)}+\ldots
+(-1)^{N}A_{\alpha_{N-2}+\alpha_{N-1}-1}^{(N-1)}+(-1)^{N-1}A_{\alpha_{N-1}}^{(N)}}\nonumber\\
&&\hspace{-0.6cm}=\sum_{\boldsymbol{\alpha}'}\sum_{\alpha_{N-1}=1}^{M}
\sr{A_{\alpha_1+1}^{(1)}-A_{\alpha_1+\alpha_2-1}^{(2)}+\ldots
+(-1)^{N}A_{\alpha_{N-2}+\alpha_{N-1}}^{(N-1)}+(-1)^{N-1}A_{\alpha_{N-1}+1}^{(N)}}.
\end{eqnarray}
We then shift the last but one index $\alpha_{N-2}\to \alpha_{N-2}-1$, which gives
\begin{eqnarray}
&&\hspace{-0.5cm}\sum_{\boldsymbol{\alpha}}
\sr{A_{\alpha_1}^{(1)}-A_{\alpha_1+\alpha_2-1}^{(2)}+\ldots
+(-1)^{N}A_{\alpha_{N-2}+\alpha_{N-1}}^{(N-1)}+(-1)^{N-1}A_{\alpha_{N-1}+1}^{(N)}}\nonumber\\
&&\hspace{-0.5cm}=\sum_{\boldsymbol{\alpha}''}\sum_{\alpha_{N-2}=2}^{M+1}
\sr{A_{\alpha_1+1}^{(1)}-A_{\alpha_1+\alpha_2-1}^{(2)}+\ldots+(-1)^{N-1}A_{\alpha_{N-3}+\alpha_{N-2}-2}^{(N-1)}\nonumber\\
&&\hspace{3cm}+(-1)^{N}A_{\alpha_{N-2}+\alpha_{N-1}-1}^{(N-1)}+(-1)^{N-1}A_{\alpha_{N-1}+1}^{(N)}}.
\end{eqnarray}
where $\boldsymbol{\alpha}''=\boldsymbol{\alpha}\setminus \alpha_{N-2}=\alpha_1,\ldots,\alpha_{N-3},\alpha_{N-1}$. 
Denoting $A^{(i)}_{M}=A^{(i)}_0+1$, we can rewrite the above as
\begin{eqnarray}
&&\hspace{-0.5cm}\sum_{\boldsymbol{\alpha}}
\sr{A_{\alpha_1}^{(1)}-A_{\alpha_1+\alpha_2-1}^{(2)}+\ldots
+(-1)^{N}A_{\alpha_{N-2}+\alpha_{N-1}}^{(N-1)}+(-1)^{N-1}A_{\alpha_{N-1}+1}^{(N)}}\nonumber\\
&&\hspace{-0.5cm}=\sum_{\boldsymbol{\alpha}''}\sum_{\alpha_{N-2}=1}^{M}
\sr{A_{\alpha_1+1}^{(1)}-A_{\alpha_1+\alpha_2-1}^{(2)}+\ldots+(-1)^{N-1}A_{\alpha_{N-3}+\alpha_{N-2}-2}^{(N-1)}\nonumber\\
&&\hspace{3cm}+(-1)^{N}A_{\alpha_{N-2}+\alpha_{N-1}-1}^{(N-1)}+(-1)^{N-1}A_{\alpha_{N-1}+1}^{(N)}}.
\end{eqnarray}
\end{widetext}
One then repeats this procedure with the remaining $\alpha_i$'s, following the rule
$\alpha_{N-i}\to \alpha_{N-i}+(-1)^{i-1}$ with $i=1,\ldots,N-1$ and uses the above reasoning 
to obtain Eq. (\ref{Gargalo})).

Having Eq. (\ref{Gargalo}), we then introduce a set of auxiliary variables
\begin{equation}
X_{\alpha_{N-1}}^{\boldsymbol{\alpha}'}=A_{\alpha_1+1}^{(1)}-A_{\alpha_1+\alpha_2-1}^{(2)}+\ldots
+(-1)^{N}A_{\alpha_{N-2}+\alpha_{N-1}-1}^{(N-1)}
\end{equation}
and note that for any choice of $\boldsymbol{\alpha}'$,
the outcomes of $X_{\alpha_{N-1}}^{\boldsymbol{\alpha}'}$ belong to $\{0,\ldots,d-1\}$. Therefore, 
the Bell expression $I_{A_1\ldots A_N}^{N,M,d}$, which we rewrite as
\begin{eqnarray}
I_{A_1\ldots A_N}^{N,M,d}&=&\sum_{\boldsymbol{\alpha}'}\sum_{\alpha_{N-1}=1}^{M}
\hspace{-0.2cm}\left(\sr{A^{(N)}_{\alpha_{N-1}}-X^{\boldsymbol{\alpha}'}_{\alpha_{N-1}}}+\sr{X^{\boldsymbol{\alpha}'}_{\alpha_{N-1}}-A^{(N)}_{\alpha_{N-1}+1}}\right)\nonumber\\
&=&\sum_{\boldsymbol{\alpha}'}I_{XA^{(N)}}^{(\boldsymbol{\alpha}')},
\end{eqnarray}
can be expressed as a combination of 
$M^{N-2}$ bipartite Bell expressions between the auxiliary variable 
$X^{\boldsymbol{\alpha}'}$ and $A^{(N)}$. Clearly, each $I_{XA^{(N)}}^{\boldsymbol{\alpha}'}$ 
obeys the monogamy relations in Eqs. (\ref{nier1}) and (\ref{nier2}).

Now, we plug the above form into (\ref{nierN1}) which yields and use
(\ref{nier1}) to obtain
\begin{eqnarray}
&&\hspace{-1cm}(M-1)I_{A_1\ldots A_N}^{N,M,d}+I_{A_1\ldots A_{N-1}A_{N+1}}^{N,M,d}\nonumber\\
&&\hspace{0.5cm}=\sum_{\boldsymbol{\alpha}'}\left[(M-1)I_{XA^{(N)}}^{(\boldsymbol{\alpha}')}+I_{XA^{(N+1)}}^{\boldsymbol{\alpha}'}\right]\nonumber\\
&&\hspace{0.5cm}\geq\sum_{\boldsymbol{\alpha}'}\left[M(d-1)\right]=M^{N-1}(d-1),
\end{eqnarray}
which gives inequality (\ref{nierN1}). In exactly the same way one proves (\ref{nierN2}).

To prove the monogamy relations for odd $N$, one first shows, using 
the same reasoning as above, that 
\begin{eqnarray}
&&\hspace{-0.7cm}\sum_{\boldsymbol{\alpha}}
\sr{A_{\alpha_1+\alpha_2-1}^{(2)}-A_{\alpha_1+1}^{(1)}+\ldots\nonumber\\
&&+(-1)^{N-1}A_{\alpha_{N-2}+\alpha_{N-1}-1}^{(N-1)}+(-1)^{N}A^{(N)}_{\alpha_{N-1}}},\nonumber\\
&&\hspace{-0.7cm}=\sum_{\boldsymbol{\alpha}}
\sr{A_{\alpha_1+\alpha_2-1}^{(2)}-A_{\alpha_1}^{(1)}+\ldots\nonumber\\
&&+(-1)^{N-1}A_{\alpha_{N-2}+\alpha_{N-1}-1}^{(N-1)}+(-1)^{N}A^{(N)}_{\alpha_{N-1}+1}}.
\end{eqnarray}
Then, one defines additional variables
\begin{equation}
\widetilde{X}_{\alpha_{N-1}}^{(\boldsymbol{\alpha}')}=A_{\alpha_1+\alpha_2-1}^{(2)}-A_{\alpha_1}^{(1)}+\ldots
+(-1)^{N-1}A_{\alpha_{N-2}+\alpha_{N-1}-1}^{(N-1)},
\end{equation}
which allows one to rewrite the Bell expression as
\begin{eqnarray}
I_{A_1\ldots A_N}^{N,M,d}&=&\sum_{\boldsymbol{\alpha}'}\sum_{\alpha_{N-1}=1}^{M}\left(\sr{A^{(N)}_{\alpha_{N-1}}-\widetilde{X}^{(\boldsymbol{\alpha}')}_{\alpha_{N-1}}}\right.\nonumber\\
&&\hspace{2cm}\left.+\sr{\widetilde{X}^{(\boldsymbol{\alpha}')}_{\alpha_{N-1}}-A^{(N)}_{\alpha_{N-1}+1}}\right)\nonumber\\
&=&\sum_{\boldsymbol{\alpha}'}\widetilde{I}_{XA^{(N)}}^{(\boldsymbol{\alpha}')}.
\end{eqnarray}
Plugging the above formula into the left-hand side of (\ref{nierN1}) and exploiting the fact that $\widetilde{I}_{XA^{(N)}}^{(\boldsymbol{\alpha}')}$ satisfies the monogamy relation (\ref{nier1}), one obtains (\ref{nierN1}). In the same way one can prove (\ref{nierN2}). This completes the proof.
\end{proof}

As before, it follows from (\ref{nierN1}) and (\ref{nierN2}) that if an $N$-partite subset 
of $N+1$ parties $A_1,\ldots, A_{N+1}$ violates the inequality (\ref{BKPNMd}), i.e., 
$I_{A_1\ldots A_N}^{N,M,d}<M^{N-2}(d-1)$, then any other $N$-partite subset cannot violate it. 
Moreover, the monogamy relations (\ref{nierN1}) and (\ref{nierN2})
are tight for any number of parties $N$. To demonstrate it let us consider $(N+1)$-partite correlations $\mathbf{p}_{N+1}$
given by the following formula
\begin{eqnarray}
&&p(a_1,\ldots, a_{N+1}|x_1,\ldots, x_{N+1})\nonumber\\
&&=
\widetilde{p}_{A_1\ldots A_N}^{q}(a_1,\ldots, a_N|x_1,\ldots, x_N)p_{A_{N+1}}(a_{N+1}|x_{N+1})\nonumber\\
\end{eqnarray}
where $\widetilde{\mathbf{p}}^{q}$ is a mixture of the extremal 
no-signaling correlations $\mathbf{p}_N^{\mathrm{NL}}$
maximally violating the Bell inequality (\ref{BKPNMd}) and the classical correlations $\mathbf{p}^{\mathrm{loc}}$
for which this inequality is saturated:
\begin{eqnarray}
&&\hspace{-0.7cm}\widetilde{p}_{A_1\ldots A_N}^{q}(a_1,\ldots, a_N|x_1,\ldots, x_N)\nonumber\\
&&=
qp_{A_1\ldots A_N}^{\mathrm{NL}}(a_1,\ldots, a_N|x_1,\ldots, x_N)\nonumber\\
&&\hspace{0.5cm}+(1-q)p_{A_1\ldots A_N}^{\mathrm{loc}}(a_1,\ldots, a_N|x_1,\ldots, x_N)
\end{eqnarray}
with $q\in [0,1]$, and $p_{A_N}(a_{N+1}|x_{N+1})$ being 
the probability distribution defined through $p_{A_{N+1}}(0|x_{N+1})=1/d$ for any $x_{N+1}$.

\section{Application to guessing probability}

Let us now link our new monogamy relations to 
the guessing probability, which is a central 
quantity in device-independent tasks such as 
randomness certification \cite{gen2}. For this purpose,
we demonstrate that for the case of $M=2$ the inequalities (\ref{ThreeMonogamy}) 
imply the monogamy relations derived in Ref. \cite{Augusiak}, which in turn
were shown to impose tight bounds on the guessing probability.

Let us begin with the definition of the guessing probability. Consider bipartite correlations 
$\mathbf{p}_{\mathrm{obs}}=\{p_{\mathrm{obs}}(a,b|x,y)\}$ with $M$ measurement and $d$ outcomes that the parties 
$A$ and $B$ observe in a Bell experiment. Due to the fact that the set of
correlations obeying the no-signaling principle is a convex polytope with 
a finite number of vertices, $\mathbf{p}_{\mathrm{obs}}$ can always be 
written as a convex combination, i.e., 
\begin{equation}\label{decomposition}
p_{\mathrm{obs}}(a,b|x,y)=\sum_e p(e|x,y) {p}^e_{\mathrm{ex}}(a,b|x,y),
\end{equation}
where $\mathbf{p}^e_{\mathrm{ex}}=\{p^e_{\mathrm{ex}}(a,b|x,y)\}$ are the vertices of the no-signaling polytope
and $p(e|x,y)$ is a probability distribution that takes into account possible
correlations between $e$ and the measurements $x$ and $y$.
The probability of guessing the outcome of 
the $x$th measurement of $A$ is then defined as
\begin{equation}\label{decomposition2}
G(x,\mathbf{p}_{\mathrm{obs}})=\max \sum_{e}p(e|x,y) G(x,\mathbf{p}_{\mathrm{ex}}^e),
\end{equation}
where the maximum is taken over all decompositions (\ref{decomposition}) 
in which and $G$ for an extremal box is simply $G(x,\mathbf{p}_{\mathrm{ex}})=\max_a p_{\mathrm{ex}}(a|x)$
with $p_{\mathrm{ex}}(a|x)$ being the marginal of $p_{\mathrm{ex}}(a,b|x,y)$ corresponding to the party $A$.
In a fully analogous way one defines the guessing probability for $B$'s measurements.

Let us now show that the monogamy relations (\ref{ThreeMonogamy}) imply tight bounds on $G$.
To this end, we consider the monogamy (\ref{ThreeMonogamy}) for $M=2$, i.e., 
%
$I_{AB}^{2,d}+I_{AC}^{2,d}\geq 2(d-1).$
%
Now, in $I^{2,d}_{AC}$ let us set $C_2=C_1+1$ and use the fact that 
$\sr{C_1-A_2}+\sr{A_2-C_1-1}=d-1$, which yields
\begin{equation}
I_{AB}^{2,d}+\sr{A_1-C_1}+\sr{C_1-A_1}\geq d-1.
\end{equation}
In a similar way one derives analogous inequalities for the remaining
pairs of measurements by $A$ and $C$, which finally gives
\begin{equation}
I_{AB}^{2,d}+\sr{A_i-C_j}+\sr{C_j-A_i}\geq d-1
\end{equation}
with $i,j=1,2$. These monogamy relations were derived in Ref. \cite{Augusiak}, and, by exploting
the properties of the expectation value $\sr{\cdot}$, 
they can be rewritten in a more meaningful way as
\begin{equation}\label{inequalities}
I_{AB}^{2,d}+1\geq dp(A_i=C_j),
\end{equation}
where $p(A_i=C_j)$ is the probability that $A$ and $C$ obtain the same outcomes
upon performing $i$th and $j$th measurement, respectively. Thus, the latter 
can be understood as a measure of how outcomes
of the measurements by $A$ and $C$ are (classically) correlated, 
these inequalities relate the nonlocality observed by $A$ and $B$ to the 
knowledge that $C$ can have about the outcomes of any measurements
by $A$ and $B$. Thus, inequalities (\ref{inequalities}) are naturally 
related to the guessing probability, and, in fact, it was shown in Ref. \cite{Augusiak}
that for any probability distribution $\mathbf{p}=\{p(a,b|x,y)\}$ 
they yield the following tight bounds
\begin{equation}\label{bound}
\max_a p_A(a|x)\leq \frac{1}{d}\left[1+I_{AB}^{2,d}(\mathbf{p})\right].
\end{equation}
The same inequality holds for the marginals of $B$. 

Now, in order to obtain a bound on the guessing probability $G$, let us 
again consider a probability distribution $\mathbf{p}_{\mathrm{obs}}$
observed by $A$ and $B$. Let us then
apply (\ref{bound}) to every element in the decomposition
(\ref{decomposition2}), which results in 
\begin{equation}
G(x,\mathbf{p}_{\mathrm{obs}})\leq \frac{1}{d}\max \sum_{e}p(e|x,y)\left[1+I^{2,d}_{AB}(\mathbf{p}_{\mathrm{ex}}^e)\right]
\end{equation}
Exploiting then the fact that the Bell expression is linear in $\mathbf{p}$ and
the decomposition (\ref{decomposition}), one finally obtains
\begin{equation}
G(x,\mathbf{p}_{\mathrm{obs}})\leq \frac{1}{d}\left[1+I^{2,d}_{AB}(\mathbf{p}_{\mathrm{obs}})\right].
\end{equation}
Due to the fact that the monogamy relations (\ref{ThreeMonogamy}) are tight, 
the above bound is tight too. This means that it is not possible to 
find a better bound on the guessing probability in terms of only the 
value of the Bell expression $I^{2,d}_{AB}$ for a given probability distribution.
In particular, at the point of maximal nonsignaling violation, $I^{M,d}=0$, 
the above bound implies that $G(x,\mathbf{p}_{\mathrm{obs}})=1/d$, meaning that 
the outcome of any measurement of $A$ is completely random, while 
for $I^{M,d}=d-1$, $G(x,\mathbf{p}_{\mathrm{obs}})\leq 1$.

\section{Conclusion}

Monogamy relations are intriguing properties of no-signaling physical theories. 
They tell us how correlations can be distributed among distant parties.
Apart from the fundamental interest, monogamy relations are also 
of importance for device-independent information tasks. 

There have been some constructions of monogamy relations for no-signaling correlations, 
however, they are either complicated or not tight. In this note
we have introduced simple bipartite monogamy relations are 
then generalized to any number of observers.
To formulate them, we have exploited the
BKP Bell inequality introduced in Ref. \cite{BKP} and its generalization 
to the multipartite case given in Ref. \cite{rodrigo}. Importantly, the resulting
trade-offs are not only simple, but also tight. Moreover, in certain scenarios they are 
shown to impose tight bounds on the guessing probability, which is a measure of 
how random are outcomes of measurements in a Bell scenario. 

We leave as an problem as to whether the above method can be generalized to 
other Bell inequalities. Or, more concretely, for which Bell inequalities 
one can formulate a simple and tight monogamy relation of the form (\ref{General}).


\begin{acknowledgments}Discussions with A. Ac\'in, M. Demianowicz, M. Paw\l{}owski, R. Ramanathan
and J. Tura are acknowledged.
This project has received funding from the European Union's Horizon 2020 
research and innovation programme under the Marie Sk\l{}odowska-Curie grant agreement No 705109.
\end{acknowledgments}


\end{document}